\documentclass[prb,preprint]{revtex4-1}

\usepackage{amsmath}
\usepackage{amsthm}
\usepackage{amsfonts}
\usepackage{graphicx}
\usepackage{url}
\usepackage[capitalise]{cleveref}
\usepackage{bm}
\usepackage{bbm}
\usepackage{subcaption}
\usepackage{ragged2e}
\newcommand{\rep}{\sigma}
\newtheorem{prop}{Proposition}
\crefname{prop}{Proposition}{Propositions}
\newtheorem{lemma}{Lemma}
\newcommand{\R}{\mathbb{R}}
\newcommand{\E}{\bm{E}}
\newcommand{\x}{\bm{x}}
\newcommand{\bb}{\bm{b}}
\newcommand{\B}{\bm{B}}
\newcommand{\intd}{\,\mathrm{d}}

\begin{document}

\title{Symmetry Promotion in Electromagnetism}

\author{Hunter Swan}
\email{orswan@stanford.edu}
\author{Jason M. Hogan}
\email{hogan@stanford.edu}
\affiliation{Department of Physics, Stanford University, Stanford, California 94305, USA}

\date{\today}

\begin{abstract}
A symmetric distribution of charges or currents typically produces an electromagnetic field having the same symmetry group.  However, under special circumstances we show that a discrete symmetry of the source distribution leads to a larger continuous symmetry for the resulting field in restricted regions.  The conditions for this promotion from discrete to continuous symmetry are that the fields be affine linear and the symmetry of the source distribution act irreducibly on the ambient space.  We illustrate this effect with many examples.  This phenomenon is pedagogically interesting both as a source of inspiration in electro- and magnetostatics and as an accessible example of group representation theory.
\end{abstract}

\maketitle 

\section{Introduction}
Symmetry is ubiquitous in modern physics, both as a foundational concept and as a practical tool for problem solving.  Historically this was not always so, and one of the first to appreciate the power of symmetry in physics was Pierre Curie, who in 1894 formulated a heuristic which has come to be known as Curie's Principle \cite{Curie}.  This observation, which originated from Curie's studies in crystallography, states that ``\textit{When certain causes produce certain effects, the symmetry elements of the causes must be found in their effects.}''\cite{castellani2016curie} 

An interesting feature of Curie's Principle (noted by Curie in the same work) is that the converse is generally not true.  That is, it may happen that an effect possesses greater symmetry than its cause.  Curie himself cited examples of this, such as the Kerr effect, wherein planar rotation symmetry of a uniform electric field may induce a change in refractive index of a medium which is symmetric both under planar rotations \textit{and} reflections through the same plane.  Such examples are nevertheless generally uncommon---in most situations causes and effects possess the same symmetry.  In this work, however, we show that there is a large class of interesting examples in electro- and magnetostatics where the symmetries of cause and effect differ. 

The example which inspired the present work is shown in \cref{fig_cylinders}, and originated as a problem in an undergraduate electromagnetism class.  It consists of three superposed charge cylinders of equal radius and uniform charge density.  The axes of the cylinders coincide with the coordinate axes in $\R^3$, so that the charge distribution has the rotational symmetry group of the cube.  (This group is known as the octahedral group, $O_h$, though details of such groups will not play an important role in our analysis.)  The electric field of a single cylinder of radius $R$, uniform charge density $\rho$, and with its axis along the $z$ coordinate axis, is given by 
\begin{equation*}
\bm{E}(x,y,z) = 
\begin{cases}
\frac{\rho \left(x \hat{x} + y\hat{y}\right)}{2\epsilon_0} & x^2+y^2 \leq R^2 \\
\frac{\rho R^2 \left(x \hat{x} + y\hat{y}\right)}{2\epsilon_0\left(x^2 + y^2\right)} & x^2+y^2 \geq R^2
\end{cases}
\end{equation*}
where $\epsilon_0$ is the permittivity of free space and we use SI units. The field of the three superposed cylinders \textit{in the region where they all overlap} is thus given by
\begin{equation*}
\bm{E}(x,y,z) = \frac{\rho}{\epsilon_0} \left(x\hat{x} + y\hat{y} + z\hat{z}\right).
\end{equation*}
We see that this electric field is completely spherically symmetric.  The symmetry of the field has been ``promoted'' from the discrete group $O_h$ of the charge distribution to the continuous group $O(3)$ of all three dimensional rotations. 

The purpose of the present work is to investigate how far this example can be pushed.  We shall show that under broad conditions of \textit{affine linearity} of the electromagnetic field and \textit{irreducibility} of a finite symmetry group action, a charge or current distribution invariant under that symmetry group can produce a field with a larger continuous rotation symmetry group in a restricted region.  We speak of the smaller symmetry of the source being ``promoted'' to the larger symmetry of the field. 

\begin{figure}[t]
\centering
    \includegraphics[width=0.9\linewidth]{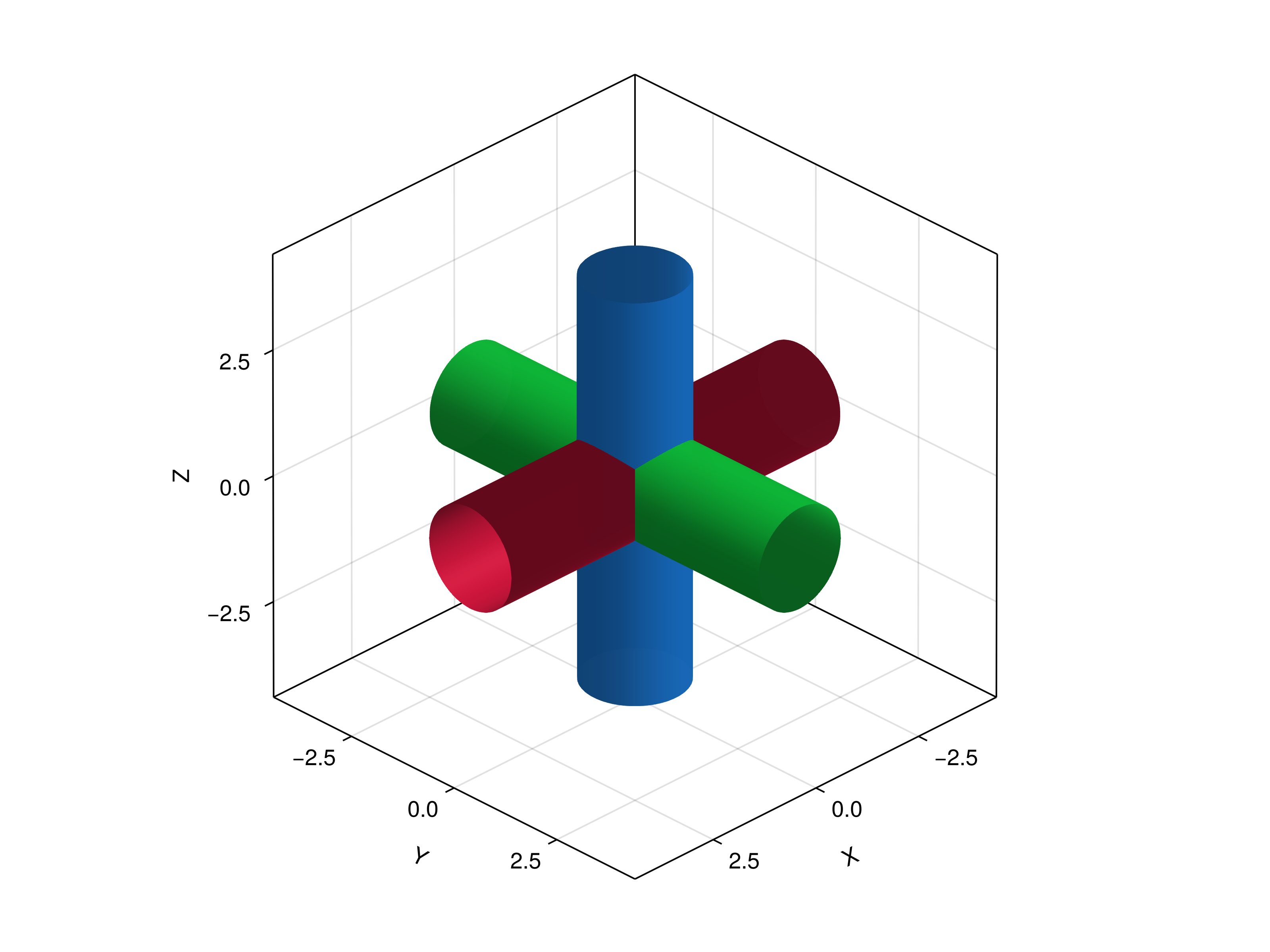}
    \caption{Three intersecting charge cylinders.  Although this charge distribution possesses only a discrete rotational symmetry group $O_h$, the electric field in the intersection of the three cylinders has a larger continuous rotational symmetry group $O(3)$.}
\label{fig_cylinders} 
\end{figure}

\begin{figure}[t] 
\centering
    \includegraphics[width=0.6\linewidth]{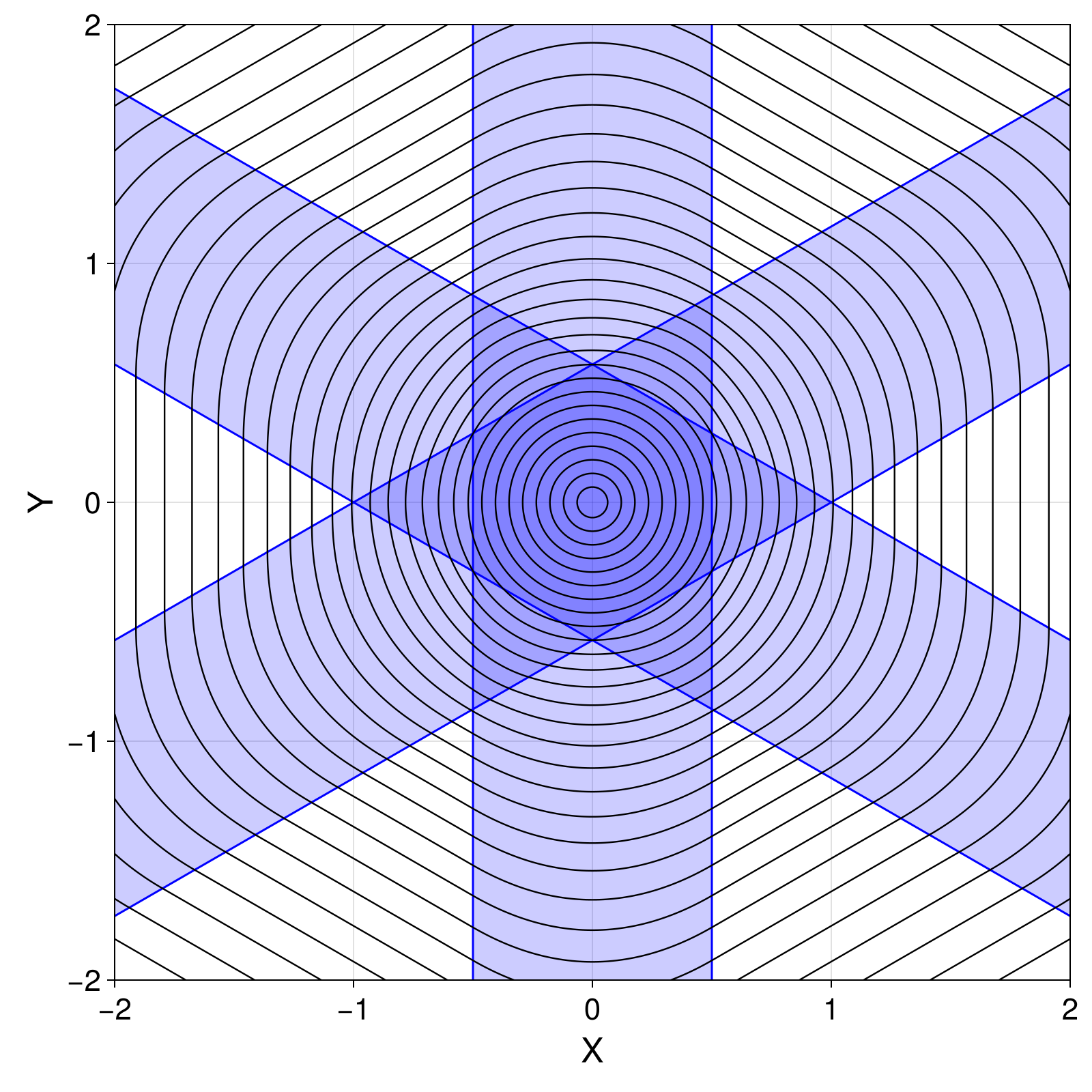}
    \caption{Superposed thick sheets of uniform current density with a sixfold rotational symmetry in the $x-y$ plane.  All current flows in the out of plane direction. Black contours are magnetic field lines.  Symmetry promotion in this case implies that the magnetic field in the region where all three sheets overlap has a complete rotational symmetry.  In particular, magnetic field lines within the central region are perfect circles, whereas outside of this region their symmetry is reduced to the symmetry of the current distribution.}
\label{fig_slabs2d} 
\end{figure}

Another example which illustrates the effect in two dimensions is provided by the thick sheets of uniform current density shown in \cref{fig_slabs2d}.  Each sheet has finite thickness $w$ in a direction $\hat{\bm n}$ in the $x-y$ plane and infinite extent in the orthogonal plane, and it carries a uniform current density $J\hat{\bm z}$.  The three sheets have normal directions $\hat{\bm n}_1 = (1,0)$, $\hat{\bm n}_2 = (-\frac{1}{2},\frac{\sqrt{3}}{2})$, and $\hat{\bm n}_3 = (-\frac{1}{2},-\frac{\sqrt{3}}{2})$ in the $x-y$ plane and are centered on the origin, so that the overall current distribution has a sixfold rotation symmetry.  In the region where the three sheets overlap, the symmetry of the resulting magnetic field is promoted from the sixfold symmetry of the system at large to the continuous rotation symmetry $O(2)$. 

An outline of this paper is as follows.  In \cref{sec_rep-theory} we introduce the necessary mathematical background.   
We have endeavored to make our exposition accessible without requiring prior familiarity with representation theory.
In \cref{sec_symmetry-promotion-3d,sec_symmetry_promotion_2d} we prove the main propositions regarding symmetry promotion in 3 dimensions and 2 dimensions, respectively.  Perhaps surprisingly, it turns out that the situation in 3D (or more generally in odd dimensions) is simpler than in 2D.  
In \cref{sec_discussion} we discuss some generalizations of the preceding results. 
We offer concluding remarks in \cref{sec_conclusion}.  

\section{Representation theory background}
\label{sec_rep-theory}
The main tool for proving statements about symmetry promotion is Schur's lemma. However, the usual statement of Schur's lemma is not quite sufficient for our needs, as it applies to complex representations, whereas we are interested in real representations. Since the precise statement of the lemma we require is not well known, and since it is not difficult to prove from first principles, we shall provide a self-contained overview. 

Given a group $G$ and a real $n$-dimensional vector space $\R^n$, a \textit{representation} of $G$ on $\R^n$ is a function $\rep:G\rightarrow \mathrm{GL}(n,\mathbb{R})$ from $G$ into the set of invertible $n\times n$ real matrices which preserves the group operation, i.e. for any two group elements $g,h\in G$ we have $\rep(gh) = \rep(g)\rep(h)$.  We speak also of the group $G$ ``acting on $\R^n$''. 
A \textit{subrepresentation} is a vector subspace $V\subseteq \R^n$ with the property that for all $g\in G$ the matrix $\rep(g)$ maps $V$ into itself. 
The representation is \textit{irreducible} if it has no subrepresentations, apart from the trivial ones consisting of the full space $\R^n$ and the zero space $\{0\}$.  We speak in this case of the group $G$ ``acting irreducibly on $\R^n$''.

The first part of Schur's lemma states that if $G$ acts irreducibly on $\R^n$ via representation $\sigma$ and if $A$ is a non-zero linear map from $\R^n$ to itself which commutes with $\rep(g)$ for all $g\in G$, then $A$ must necessarily be an isomorphism.  This follows from the fact that the kernel and image of $A$ are both subrepresentations of $\R^n$:  Given any $v\in \ker A$ and $g\in G$, we have $A \rep(g) v = \rep(g) A v = 0$, hence $\rep(g)v\in \ker A$, and thus $\ker A$ is a subrepresentation.  Similarly, given any $v\in \text{im } A$ and $g\in G$, $v = Aw$ for some $w$, hence $\rep(g)v = \rep(g) A w = A\rep(g)w$, hence $\rep(g) v \in \text{im }A$, hence $\text{im }A$ is also a subrepresentation.  Since $G$ acts irreducibly and $A$ is non-zero, $\ker A$ must be zero, hence $A$ is injective.  Then since $\text{im }A$ is non-zero and also a subrepresentation, it must be the full space $\R^n$, hence $A$ is surjective.  So $A$ is an isomorphism as claimed. 

The second part of Schur's lemma states that under certain conditions the matrix $A$ in the previous paragraph must in fact have the form $A = \lambda \mathbbm{1}$, where $\lambda\in\R$ is a scalar and $\mathbbm{1}$ is the $n\times n$ identity matrix.  For this part of the lemma to hold, it is required that the matrix $A$ have an eigenvalue.  If $A$ has an eigenvalue $\lambda$, then by definition the matrix $A-\lambda \mathbbm{1}$ annihilates some vector, i.e. there is some non-zero $v\in \ker \left(A-\lambda \mathbbm{1}\right)$.  Now if the matrix $A-\lambda\mathbbm{1}$ were not uniformly zero, then the previous paragraph would apply and we would have a contradiction.  Hence we find that $A-\lambda\mathbbm{1} = 0$, proving the second part of Schur's lemma.  

The question now becomes under what circumstances must $A$ have an eigenvalue?  The usual statement of Schur's lemma is for representations over algebraically closed fields, such as the complex numbers $\mathbb{C}$.  In this case, the fundamental theorem of algebra guarantees that the characteristic polynomial of $A$ has a root, which implies the existence of an eigenvalue.  However, on a real vector space like $\R^n$, the characteristic polynomial of $A$ also is guaranteed to have a root \textit{provided the dimension of the space is odd}.  This is because the characteristic polynomial of $A$ has degree $n$, and an odd degree polynomial always has a root.  

So we conclude that the second part of Schur's lemma applies to odd dimensional real representations, but not to even dimensional spaces.  It will turn out that this is not an academic curiosity---the failure of Schur's lemma in 2 dimensions has interesting consequences, as we shall see.  We summarize this result as 
\begin{lemma}
Let $G$ act irreducibly on $\R^n$, and let $A:\R^n\rightarrow\R^n$ be a non-zero linear map which commutes with $\rep(g)$ for all $g\in G$.  Then $A$ is an isomorphism, and if $n$ is odd, then $A = \lambda \mathbbm{1}$ for some $\lambda \in \R$. 
\end{lemma}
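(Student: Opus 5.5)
The proof has two parts, mirroring the two halves of the statement. In each part I show that certain subspaces attached to an intertwiner are subrepresentations, and then use irreducibility to force them to be trivial.

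\emph{Step 1: $A$ is an isomorphism.} First I show that $\ker A$ and $\operatorname{im} A$ are subrepresentations. For $v\in\ker A$ and $g\in G$ we have $A\rep(g)v=\rep(g)Av=0$, so $\rep(g)v\in\ker A$. For $v=Aw\in\operatorname{im}A$ we have $\rep(g)v=A\rep(g)w\in\operatorname{im}A$. By irreducibility each of these subspaces is either $\{0\}$ or $\R^n$. Since $A\neq 0$, the kernel is not all of $\R^n$, so $\ker A=\{0\}$. Likewise the image is not $\{0\}$, so $\operatorname{im}A=\R^n$. Hence $A$ is bijective, i.e.\ an isomorphism. (Injectivity alone would suffice by rank--nullity, but the symmetric argument is cleaner.)

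\emph{Step 2: $A=\lambda\mathbbm{1}$ when $n$ is odd.} Here the only real input is that $A$ has a real eigenvalue. The characteristic polynomial $p(t)=\det(t\mathbbm{1}-A)$ is a real polynomial of odd degree $n$ with leading coefficient $1$. It therefore tends to $-\infty$ as $t\to-\infty$ and to $+\infty$ as $t\to+\infty$, so by the intermediate value theorem it has a real root $\lambda$. Set $B=A-\lambda\mathbbm{1}$. Then $B$ still commutes with every $\rep(g)$, since the identity commutes with everything. Moreover $\det B=0$, so $B$ has a nontrivial kernel and is not an isomorphism. By Step 1, applied in contrapositive form, $B$ must be the zero map. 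Hence $A=\lambda\mathbbm{1}$.

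The main point needing care is the existence of the real eigenvalue. This is exactly where odd dimension enters, and where the argument breaks down for even $n$. For example, a rotation by $90^\circ$ in $\R^2$ commutes with the irreducible action of the cyclic group $C_4$ but is not a scalar. Everything else is formal once Step 1 is established.
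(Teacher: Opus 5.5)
Your proof is correct and follows the paper's route. In both, the kernel and image of $A$ are subrepresentations, which forces $A$ to be an isomorphism; in odd dimension the characteristic polynomial has a real root $\lambda$, so $A-\lambda\mathbbm{1}$ is a non-invertible map commuting with every $\rep(g)$ and must vanish. Your intermediate-value argument for the real root and your $C_4$ counterexample in $\R^2$ spell out details the paper states only briefly.
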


We conclude this section by recalling the transformation properties of fields of scalars, vectors, and pseudo-vectors under groups of rotations.
For a subgroup $G\subseteq O(n)$ of the set of rotations of $\R^n$ we have $\rep(g) = g$ for the usual representation on $\R^n$.  This naturally gives an action on scalar fields defined over $\R^n$, whereby a field $\rho(\x)$ is mapped to $\rho(g^{-1}\x)$ by an element $g\in G$.  Similarly, there is a natural action on vector fields sending $\bm{v}(\x)$ to $g\bm{v}(g^{-1}\x)$, and a natural action on pseudo-vector fields sending $\bm{v}(\x)$ to $\det(g) g\bm{v}(g^{-1}\x)$. 

In the context of electromagnetism, static distributions of charge $\rho(\x)$ or current $\bm{J}(\x)$ which are invariant under $G$ (i.e. for all $g\in G$ we have $\rho(g^{-1}\x) = \rho(\x)$ and $g\bm{J}(g^{-1}\x) = \bm{J}(\x)$) produce electric or magnetic fields which are also invariant under the same group.  This is intuitively clear, but can be derived analytically from the laws of Coulomb and Biot-Savart.  For electric fields we have 
\begin{multline}
\E(\x) = \frac{1}{4\pi\epsilon_0}\int_{\R^3} \frac{\rho(\x') \left(\x-\x'\right)}{\left|\x - \x'\right|^3} \intd^3\x' = \frac{1}{4\pi\epsilon_0}\int_{\R^3} \frac{\rho(g\x') \left(\x-g\x'\right)}{\left|\x - g\x'\right|^3} \intd^3(g\x') \\ = \frac{1}{4\pi\epsilon_0}g\int_{\R^3} \frac{\rho(\x') \left(g^{-1}\x-\x'\right)}{\left|g^{-1}\x - \x'\right|^3} \intd^3\x' = g \E(g^{-1}\x),
\end{multline}
and similarly for magnetic fields 
\begin{multline}
\B(\x) = \frac{\mu_0}{4\pi} \int_{\R^3}\frac{\bm{J}(\x') \times (\x - \x')}{\left| \x - \x' \right|^3} \intd^3\x' = \frac{\mu_0}{4\pi} \int_{\R^3}\frac{\bm J(g\x') \times (\x - g\x')}{\left| \x - g\x' \right|^3} \intd^3(g\x') \\ = \frac{\mu_0}{4\pi} g \int_{\R^3} \det(g)\frac{\bm J(\x') \times (g^{-1}\x - \x')}{\left| g^{-1}\x - \x' \right|^3} \intd^3\x' = \det(g) g\B(g^{-1}\x).
\end{multline}
We have used here the transformation properties of the cross-product $(g\bm{v})\times (g\bm{w}) = \det(g)g(\bm{v}\times \bm{w})$ and of the volume element $\intd^3(g\x)=\left|\det(g)\right| \intd^3\x = \intd^3\x$, where $\left|\det(g)\right| = 1$ for a rotation. 

\section{Symmetry Promotion in 3 Dimensions}
\label{sec_symmetry-promotion-3d}
We now establish our main result for symmetry promotion in 3D.  We will show that under appropriate conditions, affine linear vector or pseudo-vector fields which are invariant under finite groups of proper rotations are in fact of the form $\bm v(\x) = \lambda \x$ for some $\lambda$, and thus invariant under arbitrary proper rotations.  Note that it is important to allow for affine linearity, rather than just linearity, as it allows us to form composite charge distributions by arbitrary translations of elementary distributions such as the uniform sphere, cylinder, or slab.  In \cref{sec_symprom_3d_examples}, we will form superpositions of such elementary distributions which are invariant under an irreducible group action.  The following proposition will then imply that these superpositions exhibit symmetry promotion. 

\begin{prop}
Let $G$ be a finite subgroup of $SO(3)$ which acts irreducibly on $\R^3$.  Let $\bm{v}(\x)$ be a (pseudo-)vector field which is invariant under $G$ and which is affine linear on some region $U\subset \R^3$, where $U$ is also invariant under $G$ and contains at least one nonzero point $\x \in \R^3$. Then there is some scalar $\lambda \in \R$ such that $\bm{v}(\x) = \lambda \x$ for all $\x \in U$. 
\label{prop_3d}
\end{prop}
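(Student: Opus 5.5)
Write $\bm v(\x) = A\x + \bb$ on $U$, where $A$ is a real $3\times 3$ matrix and $\bb\in\R^3$. The plan has three steps. First, use $G$-invariance to show that $A$ commutes with every $g\in G$ and that $\bb$ is fixed by every $g$. Second, use irreducibility to conclude $\bb=0$. Third, apply the Lemma of \cref{sec_rep-theory} with $n=3$ to get $A=\lambda\mathbbm{1}$.

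Since $G\subseteq SO(3)$, we have $\det(g)=1$, so vector and pseudo-vector fields transform the same way: $\bm v(\x)=g\bm v(g^{-1}\x)$. Because $U$ is $G$-invariant, both $\x$ and $g^{-1}\x$ lie in $U$ whenever $\x\in U$. The affine formula therefore applies to both, giving
\begin{equation*}
A\x+\bb = gAg^{-1}\x + g\bb \qquad \text{for all } \x\in U,\ g\in G.
\end{equation*}
Set $M_g = A - gAg^{-1}$ and $\bm c_g = g\bb-\bb$. The identity says that the affine map $\x\mapsto M_g\x-\bm c_g$ vanishes on $U$.

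This is the main obstacle. To conclude $M_g=0$ and $\bm c_g=0$, I would like $U$ to be affinely spanning, e.g.\ open, which is the natural reading of ``region''. Then differences of points in $U$ span $\R^3$, so $M_g=0$, and then $\bm c_g=0$. If instead one only uses the hypothesis that $U$ contains a nonzero point, the argument goes as follows. The affine span of the $G$-orbit of that point is a $G$-invariant affine subspace. Its translation space is a subrepresentation, hence is $0$ or $\R^3$ by irreducibility. It cannot be $0$, since a single point fixed by all of $G$ would span a one-dimensional subrepresentation. Hence the orbit affinely spans $\R^3$, and the affine identity extends from $U$ to all of $\R^3$. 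I would state this reduction explicitly, and also note that the vanishing of $M_g\x-\bm c_g$ on the orbit points suffices by linearity.

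With $gA=Ag$ and $g\bb=\bb$ for all $g\in G$, the span of $\bb$ is invariant under $G$, so it is a subrepresentation. By irreducibility of the action on $\R^3$, a one-dimensional subrepresentation is impossible, so $\bb=0$. If $A=0$, take $\lambda=0$. Otherwise the Lemma applies with $n=3$ odd and gives $A=\lambda\mathbbm{1}$. Hence $\bm v(\x)=\lambda\x$ on $U$.
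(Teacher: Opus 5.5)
Your proof is correct. It shares the paper's skeleton: the invariance equation on $U$, extension from a single $G$-orbit to all of $\R^3$ via irreducibility, and then odd-dimensional Schur. The difference is in how you dispose of $\bb$.

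The paper first proves that $\sum_{g\in G} g\x=\bm 0$ for every $\x$, since the orbit sum is $G$-fixed and hence zero. It then averages $\bm v(g\x)$ over $G$ at a single point of $U$. One side equals $\bb$, and the other equals $\frac{1}{|G|}\sum_g gA\x+\frac{1}{|G|}\sum_g g\bb=\bm 0$. Only once $\bb=\bm 0$ is known does it extend the purely linear relation $Ag\x=gA\x$ off $U$. For that step the \emph{linear} span of the orbit of a nonzero point suffices.

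You instead extend the whole affine identity at once. This requires the orbit to span $\R^3$ \emph{affinely}, and your translation-space argument establishes that correctly. You then get $[A,g]=0$ and $g\bb=\bb$ together, and kill $\bb$ as a $G$-fixed vector.

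Your version is slightly more economical and never uses finiteness of $G$, whereas the paper's averaging lemma does. The paper's version isolates a standard, reusable fact: orbit averages vanish for a nontrivial irreducible representation. That fact also explains why the linear and affine spans of an orbit coincide, since the origin is the orbit's centroid.
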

\begin{proof}
The hypothesis of affine linearity implies there exist a matrix $A$ and vector $\bb$ so that for $\x \in U$ we have $\bm{v}(\x) = A\x + \bb$.  The hypothesis of invariance under $G$ implies that for all $\x \in U$ and $g \in G$ we have
\begin{equation}
Ag\x + \bb = \bm{v}(g\x) = g\bm{v}(\x) = gA\x + g\bb.
\end{equation}
We will first show that $\bb=0$ by way of the following

\begin{lemma}
For any $\x \in \R^3$, we have 
\begin{equation}
\sum_{g\in G} g\x = \bm 0
\end{equation}
\end{lemma}
\begin{proof}
Let $\bm y:= \sum_{g\in G} g\x$.  For any $g'\in G$ we have 
\begin{equation}
g'\bm y = g' \sum_{g\in G} g\x = \sum_{g\in G} g'g\x = \sum_{g\in G} g\x = \bm y.
\end{equation}
In the penultimate equality we have used the fact that $g'g$ assumes every value in $G$ exactly once as $g$ ranges over all elements of $G$.  Hence the subspace of $\R^3$ spanned by $\bm y$ is invariant under the action of $G$.  Since $G$ acts irreducibly, $\bm y=\bm 0$. 
\end{proof}

To show that $\bb=0$, consider the quantity $\frac{1}{|G|}\sum_{g\in G} \bm v(g\x)$.  On the one hand we have
\begin{equation}
\frac{1}{|G|}\sum_{g\in G} \bm v(g\x) = A \left(\frac{1}{|G|}\sum_{g\in G} g\x\right) + \bb = \bb.
\end{equation}
On the other hand we have
\begin{equation}
\frac{1}{|G|}\sum_{g\in G} \bm v(g\x) = \frac{1}{|G|}\sum_{g\in G} g\bm v(\x) = \frac{1}{|G|}\sum_{g\in G} g A\x + \frac{1}{|G|}\sum_{g\in G} g\bb = \bm 0,
\end{equation}
and thus $\bb = \bm 0$. 

Having established that $\bm v(\x) = A\x$, we see that for all $g\in G$ and $\x \in U$ we have $Ag\x = gA\x$.  We would like to show that this holds for any $\x \in \R^3$ so that we can conclude that $Ag=gA$ and apply Schur's lemma.  

Note that for any nonzero $\x\in U$, the set of points $\{g\x \; | \; g\in G\}$ spans $\R^3$, for if not then the span of the former set would be a proper invariant subspace with respect to the action of $G$.  Hence the points of $U$ span $\R^3$, and so by linearity the relation $Ag\x = gA\x$ holds for any $\x \in \R^3$. Thus $Ag = gA$ for any $g \in G$, i.e. $A$ commutes with all elements of $G$.  Finally, appealing to Schur's lemma for odd-dimensional real vector spaces, $A=\lambda \mathbb{I}$ for some $\lambda\in\R$, and the proposition is proved. 
\end{proof}
Note that the proof works equally well replacing $SO(3)$ and $\R^3$ with any $SO(n)$ and $\R^n$ as long as $n$ is odd.  In the case of an electric field $\E(\x)$ satisfying the hypothesis of the proposition, the constant $\lambda$ may be determined from Gauss' Law, 
\begin{equation}
\lambda = \frac{1}{3}\nabla \cdot (\lambda \x) = \frac{1}{3}\nabla \cdot \E(\x) = \frac{\rho(\x)}{3\epsilon_0},
\end{equation}
which also shows that the charge density must be constant in the region of interest $U$.  The analogous result for magnetic fields shows that the only solutions satisfying the hypotheses of \cref{prop_3d} are in fact uniformly zero in the region $U$, since $\nabla \cdot \B = 0$.  We will see below that the 2D case is more interesting for magnetic fields. 

\subsection{3D Examples}
\label{sec_symprom_3d_examples}

\begin{figure}[b!]
    \centering
    
    \begin{subfigure}[b]{0.435\textwidth}
        \centering
        \includegraphics[width=\textwidth]{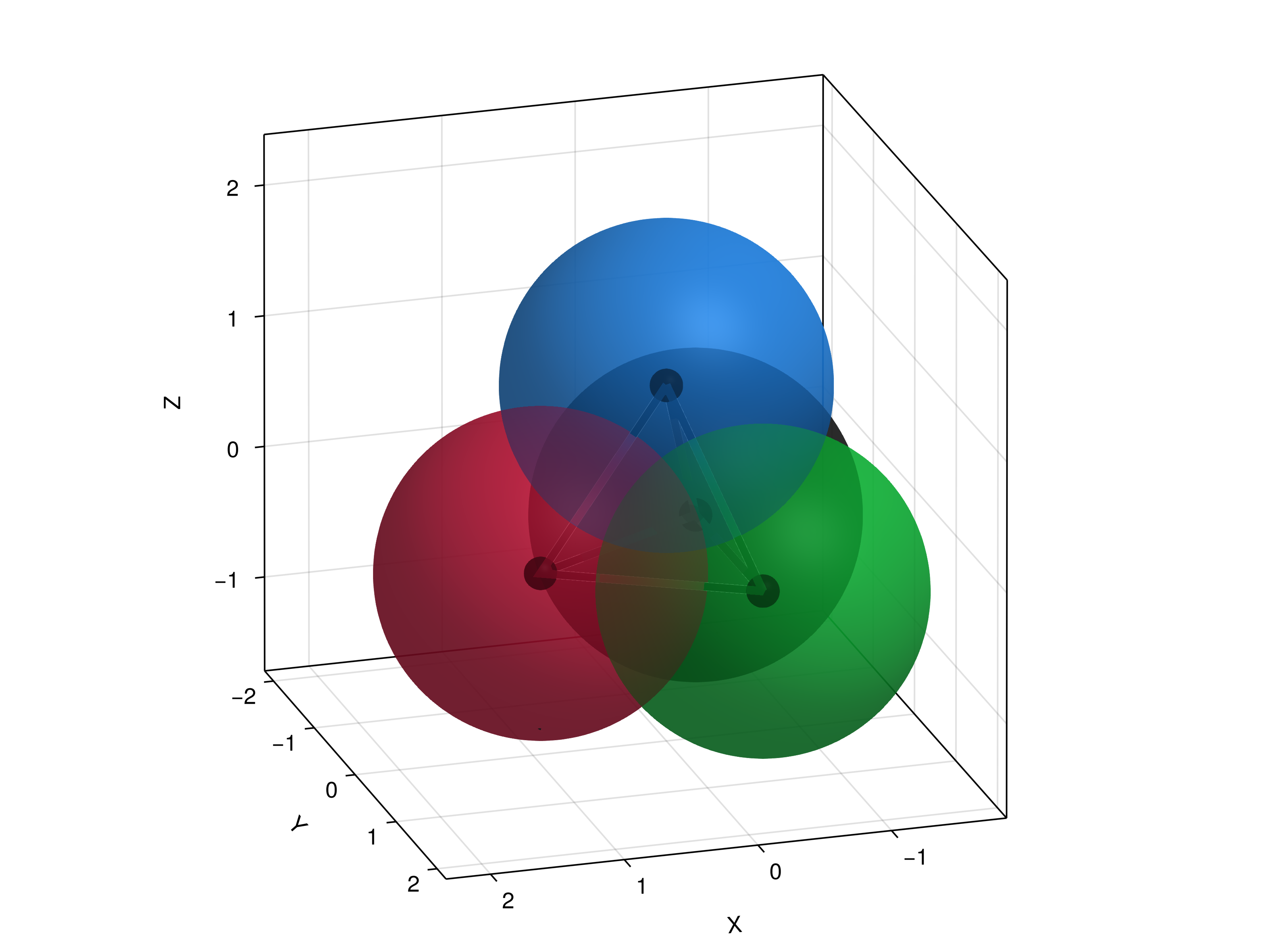}
        \caption{Spheres at the vertices of a regular tetrahedron.}
        \label{fig_3d_sub1}
    \end{subfigure}
    \hfill
    \begin{subfigure}[b]{0.435\textwidth}
        \centering
        \includegraphics[width=\textwidth]{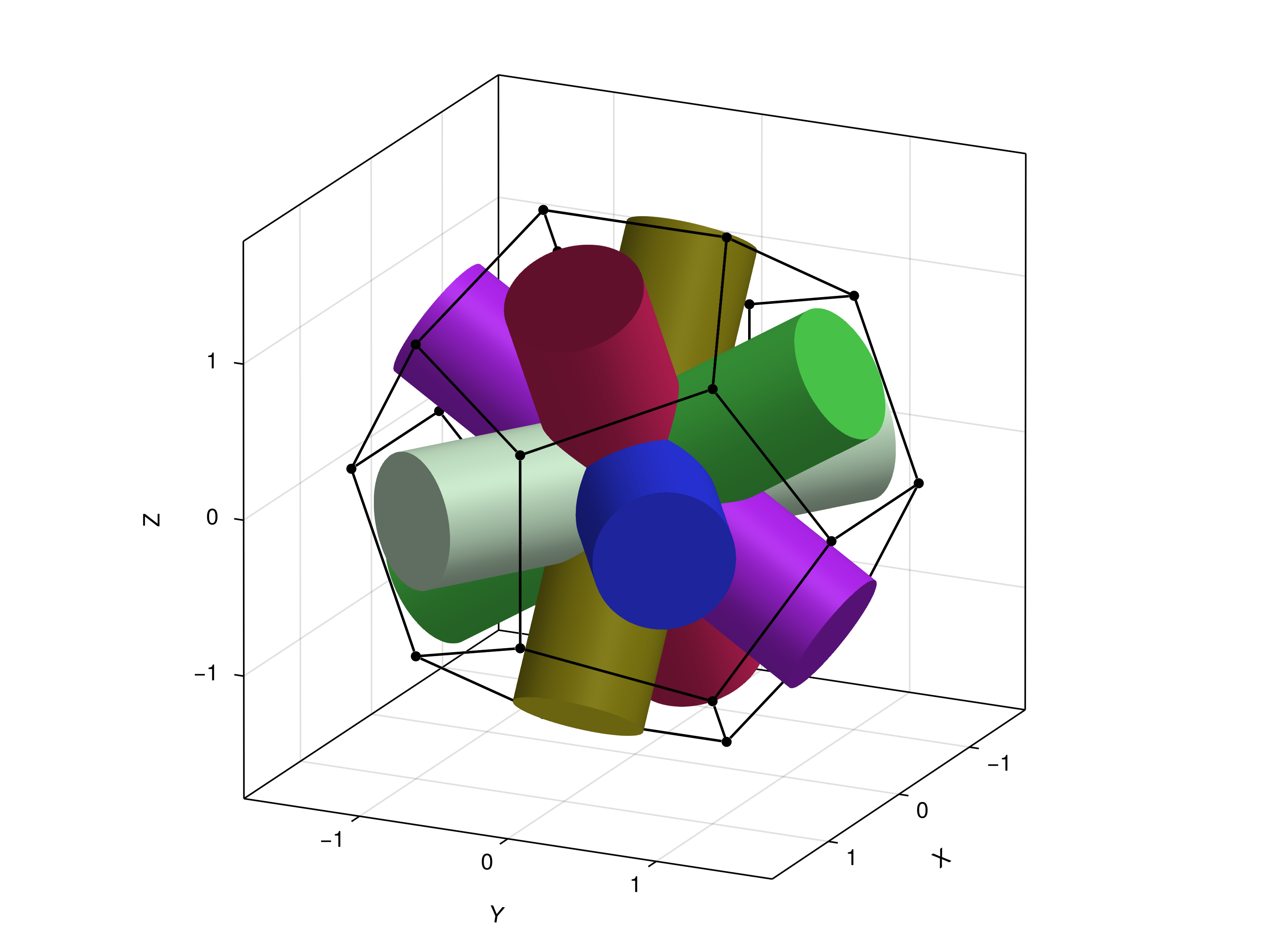}
        \caption{Infinite cylinders piercing the faces of a dodecahedron.}
        \label{fig_3d_sub2}
    \end{subfigure}
    
    \begin{subfigure}[b]{0.435\textwidth}
        \centering
        \includegraphics[width=\textwidth]{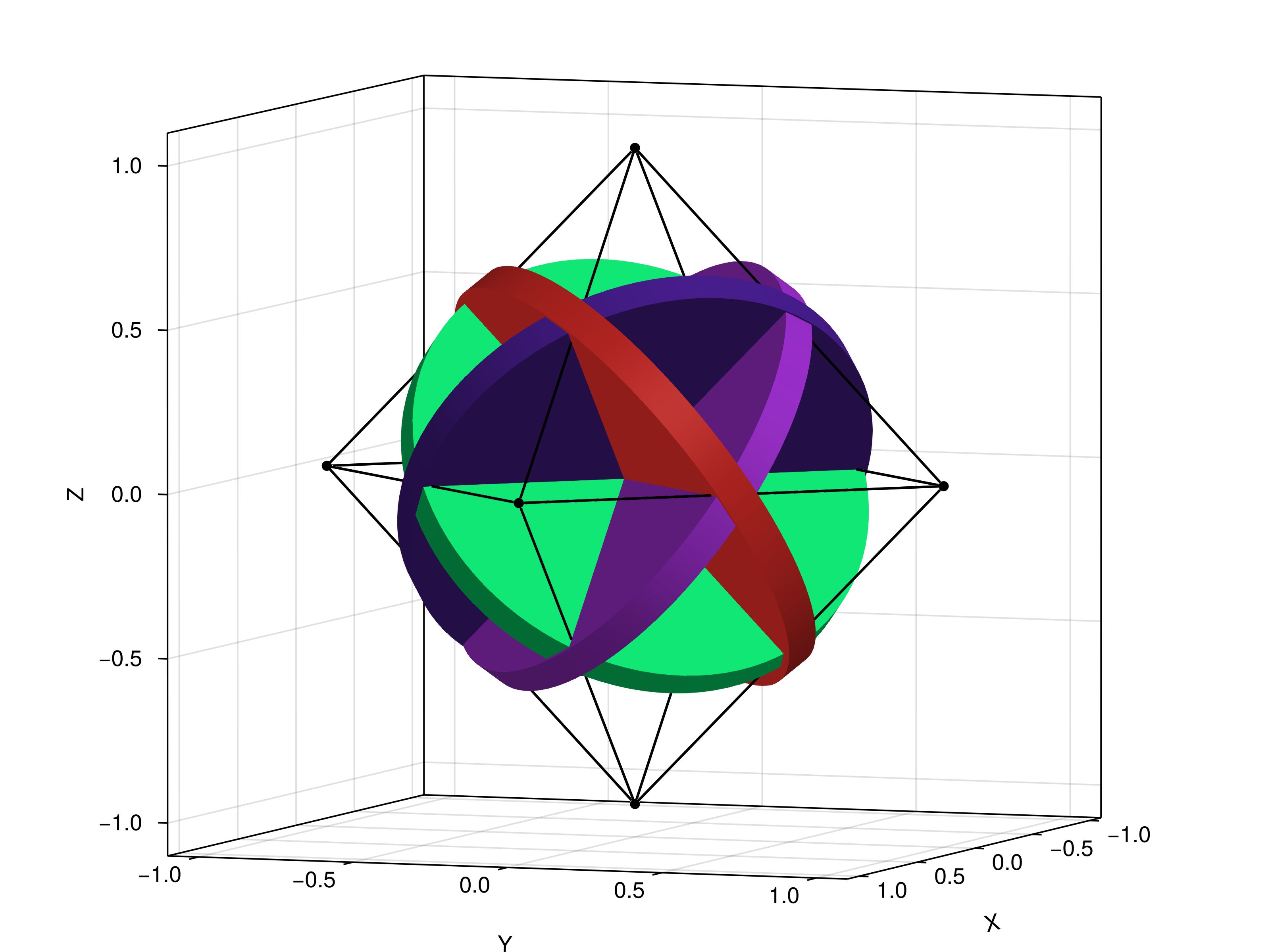}
        \caption{Infinite slabs parallel to faces of an octahedron, centered on the origin.}
        \label{fig_3d_sub3}
    \end{subfigure}
    \hfill
    \begin{subfigure}[b]{0.435\textwidth}
        \centering
        \includegraphics[width=\textwidth]{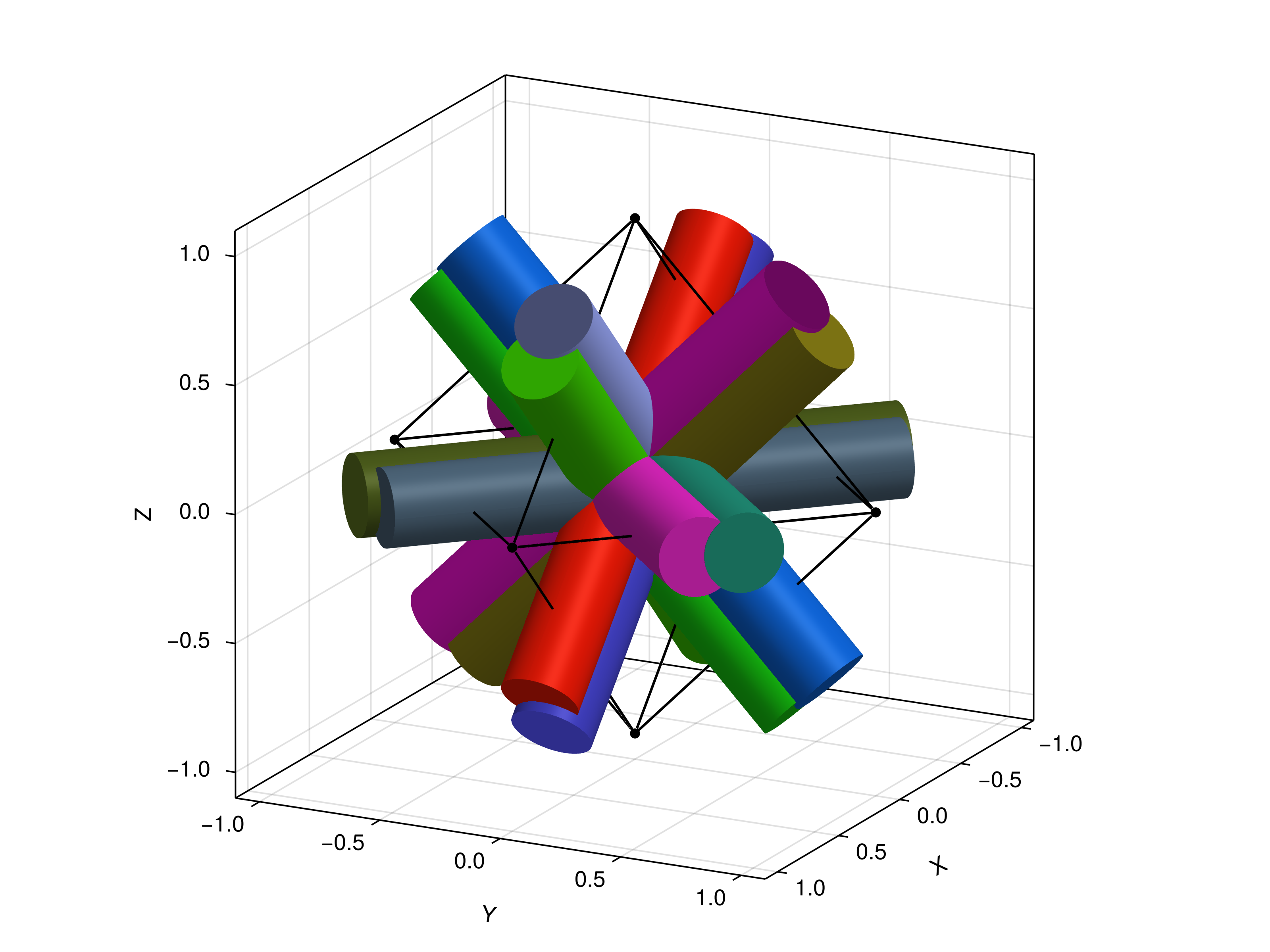}
        \caption{Infinite cylinders with axes aligned to the edges of an octahedron.}
        \label{fig_3d_sub4}
    \end{subfigure}
    
    \caption{Examples of charge distributions to which 3D symmetry promotion applies.  Each distribution is constructed by replicating an elementary distribution so as to have the symmetry of some Platonic solid.  In these examples, we have aligned the elementary distributions of the infinite slab (shown truncated as a disk), infinite cylinder (also truncated), and sphere to either the vertices, faces, or edges of a Platonic solid.  Other symmetric polyhedra (e.g. truncated Platonic solids) also suffice for this construction, provided they have a symmetry group which acts irreducibly on $\R^3$.}
    \label{fig_3d_examples}
\end{figure}

To generate examples of electric fields to which \cref{prop_3d} applies, we may begin by picking a charge distribution $\rho_0(\x)$ which produces an affine linear field near the origin and a group $G \subset SO(3)$ which acts irreducibly on $\R^3$.  We then symmetrize the charge distribution by superposing rotated distributions of the form $\rho_0(g^{-1}\x)$ for $g\in G$.  Examples of charge distributions which work for this purpose include the infinite cylinder, the infinite slab, and the ball, in each case with a uniform interior charge density.  Examples of symmetry groups which work for this purpose are those of the Platonic solids.  The example of \cref{fig_cylinders} is one example of this construction.  Note that (as the example of \cref{fig_cylinders} illustrates) to symmetrize the charge distribution, it always suffices to sum $\rho_0(g^{-1}\x)$ over all $g\in G$, but if $\rho_0$ is already symmetric under some elements $g$, then a smaller sum also suffices.  We give other examples of this form in \cref{fig_3d_examples}.

The patterns of these examples suggest two natural questions. Firstly, must the region to which symmetry promotion applies necessarily include the origin?  Secondly, are there other \textit{bounded} charge distributions besides the uniform ball which produce a linear field and can thus be employed in the above construction? 

\begin{figure}[b]
\centering
    \includegraphics[width=0.5\linewidth]{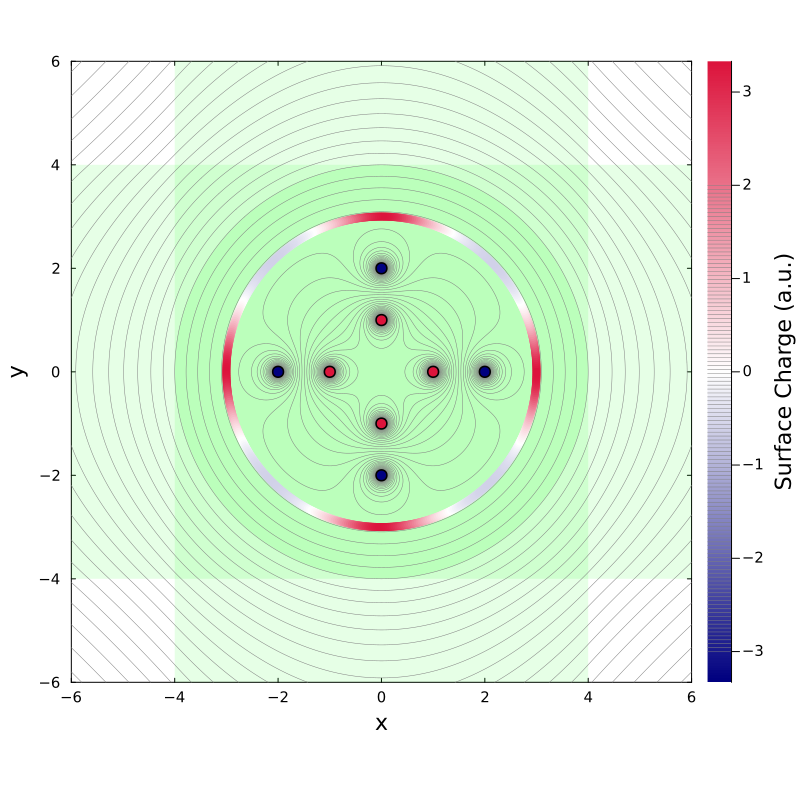}
    \caption{Equipotential curves in the $x-y$ plane of a charge distribution for which symmetry promotion occurs in a region which excludes the origin.  The coloration of the circle of radius 3 indicates the interior surface charge induced on the metal spherical shell by the point charges.  The transparent green regions are the continuous charge densities of the cylinders shown in \cref{fig_cylinders}.}
\label{fig_origin-excluded} 
\end{figure}

\begin{figure}[t]
    \centering
    
    \begin{subfigure}[b]{0.25\textwidth}
        \centering
        \includegraphics[width=\textwidth]{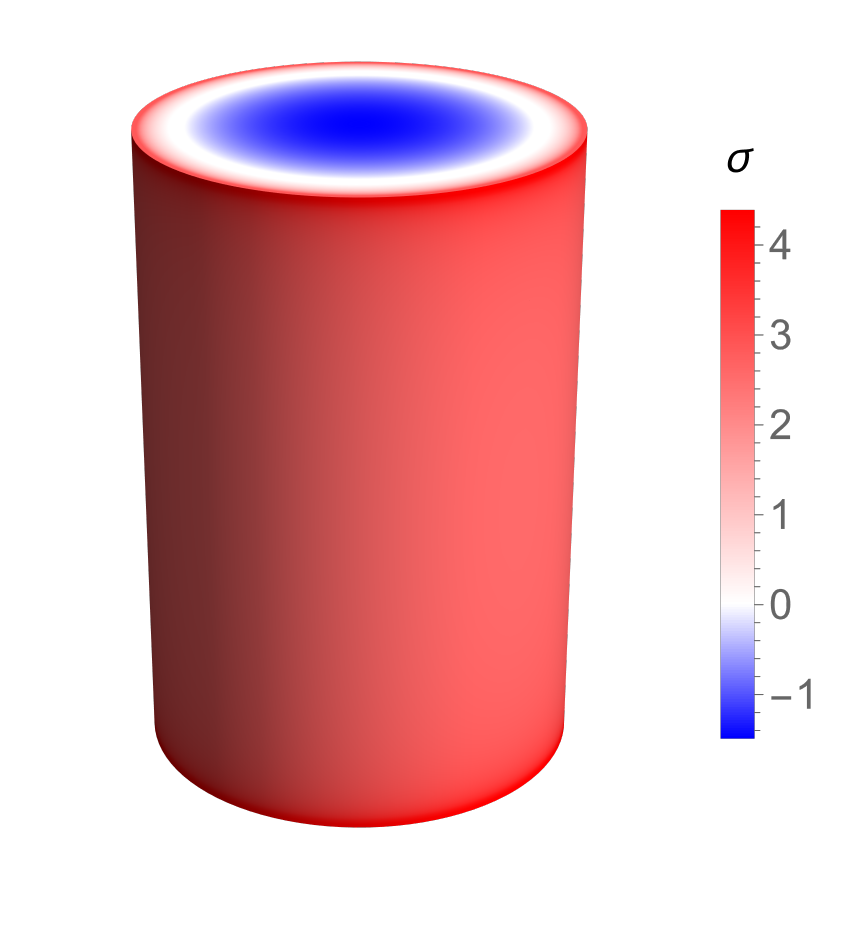}
        \caption{}
        \label{fig_charge_cyl_sub1}
    \end{subfigure}
    \begin{subfigure}[b]{0.33\textwidth}
        \centering
        \includegraphics[width=\textwidth]{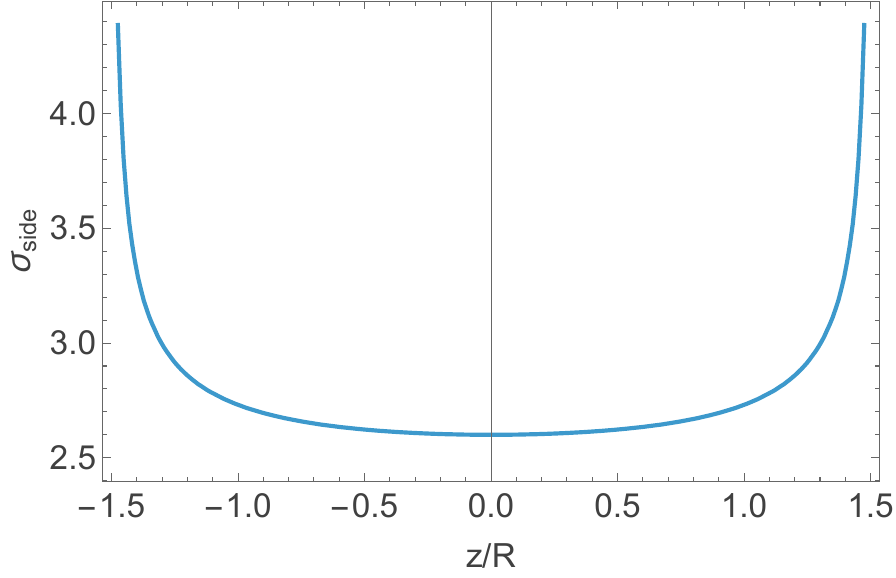}
        \caption{}
        \label{fig_charge_cyl_sub2}
    \end{subfigure}
    \begin{subfigure}[b]{0.33\textwidth}
        \centering
        \includegraphics[width=\textwidth]{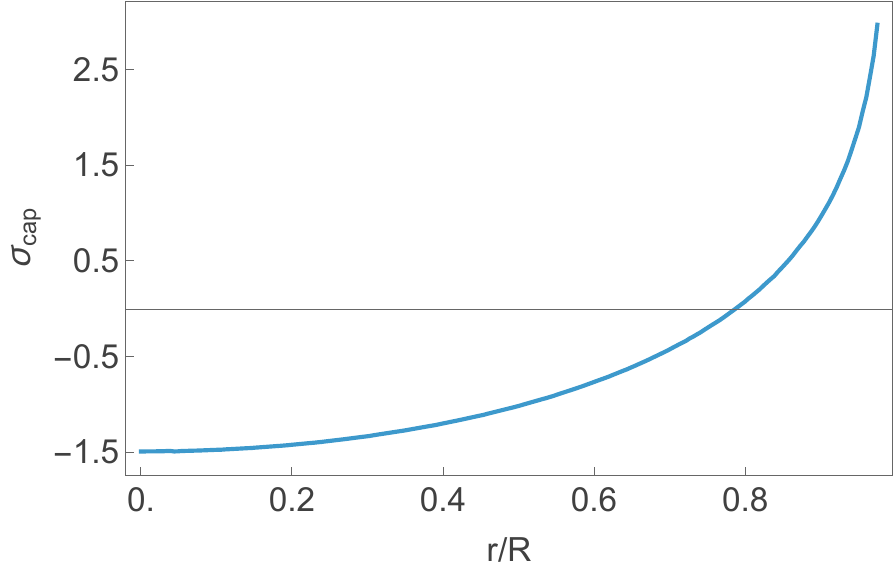}
        \caption{}
        \label{fig_charge_cyl_sub3}
    \end{subfigure}
    \caption{Surface charge distribution of a finite length cylinder with a uniform interior charge density and radial linear interior field, equivalent to that of an infinite cylinder.  All charge densities are in arbitrary units, and distances are in units of the cylinder radius.  The height of the cylinder is 3.  \textbf{(a)} shows the surface charge density via coloration on the cylinder surface.  \textbf{(b)} plots the charge density along a vertical line from the bottom to the top of the cylinder side.  \textbf{(c)} plots the surface charge density along a radial line from the center to the edge of the cylinder cap.}
    \label{fig_charge_cyl}
\end{figure}

The answer to the first question is negative.  Considering the example of \cref{fig_cylinders}, let us modify a small ball of radius $3\varepsilon$ near the origin as follows.  Place point charges of charge $+q$ at $\pm \varepsilon\hat x$, $\pm \varepsilon\hat y$, and $\pm \varepsilon\hat z$ and point charges of charge $-q$ at $\pm 2\varepsilon\hat x$, $\pm 2\varepsilon\hat y$, and $\pm 2\varepsilon\hat z$.  Then surround them with a conducting spherical shell of radius $3\varepsilon$ centered at the origin.  Inside the sphere the electric field will be highly non-linear and not spherically symmetric, but outside of the sphere the surface charge of the metal precisely cancels the field of the interior charges, as shown in \cref{fig_origin-excluded}.  Thus symmetry promotion in this case will occur in the region exterior to the sphere but interior to all of the charge cylinders. 

The answer to the second question is positive.  Examples may be generated by truncating the infinite slab or cylinder to any bounded region $B$ and applying appropriate surface charges on the boundary of this region so as to maintain the same interior electric field as for the original distribution.  To determine this surface charge distribution, we solve the Laplace equation in the region exterior to the finite distribution subject to boundary conditions of continuity across the surface $\partial B$ and the interior potential implied by the corresponding infinite charge distribution.  An example of such a distribution is shown in \cref{fig_charge_cyl}. 

\begin{figure}[t]
\centering
    \includegraphics[width=0.7\linewidth]{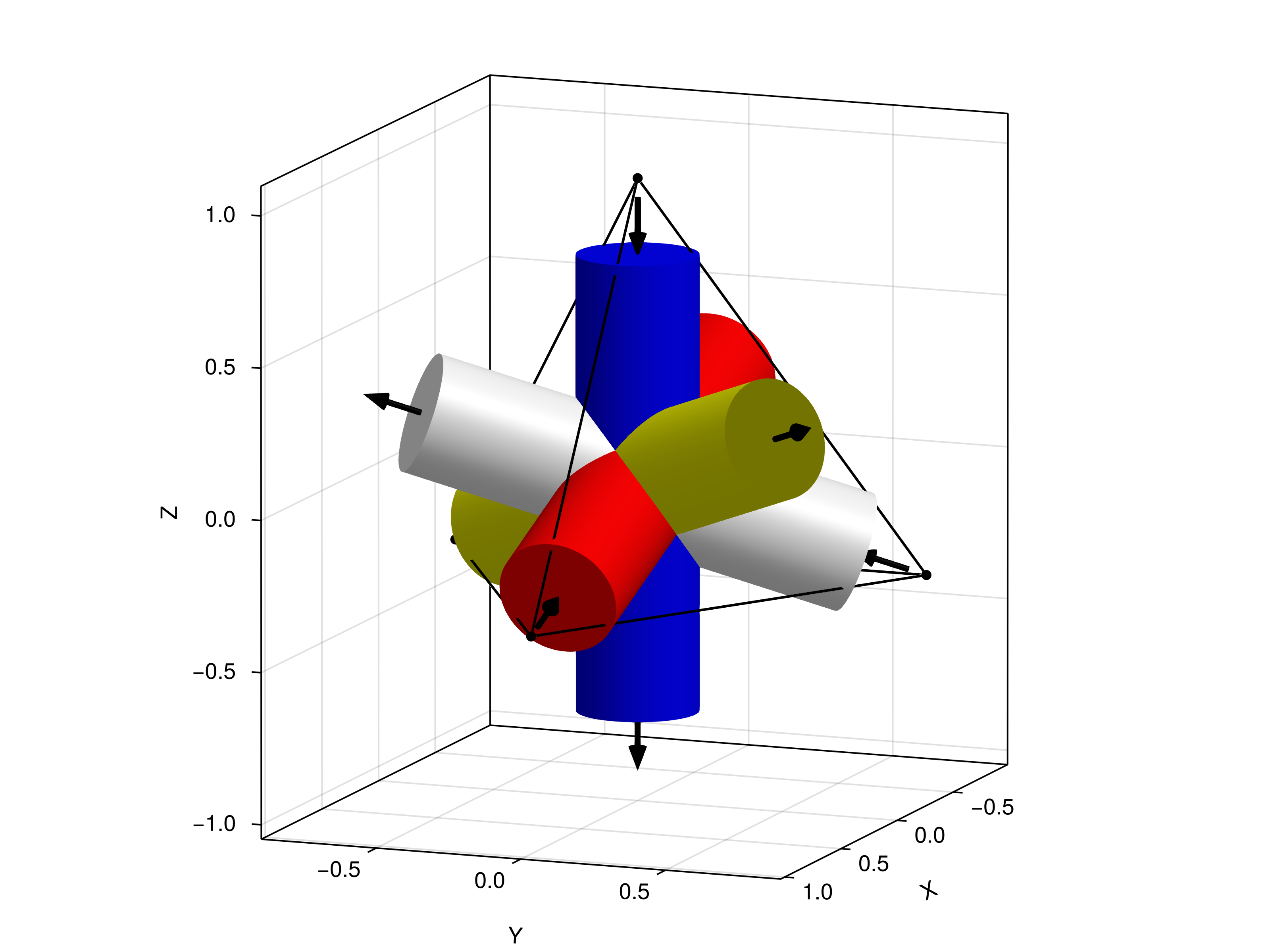}
    \caption{Current distribution of thick wires piercing the faces of a regular tetrahedron, exhibiting symmetry promotion near the origin, so that the magnetic field is zero in the region of intersection of all wires.  Arrows indicate the direction of current flow.}
\label{fig_tet_cyl}
\end{figure}

In the case of magnetic fields, a charge distribution comprised of infinite cylinders or slabs may be reinterpreted as a current distribution of thick wires or sheets, provided a direction of current flow can be assigned which preserves the symmetry group.  This requires in particular that no wire have an axis which is inverted by a symmetry operation, and no slab have an axis which is preserved by a non-trivial symmetry operation, which excludes e.g. the distributions of \cref{fig_3d_sub2,fig_3d_sub3,fig_3d_sub4}.  An example of non-trivial 3D symmetry promotion for magnetic fields is provided by piercing each face of a regular tetrahedron by a thick wire, such that positive current flows outward through the face, as shown in \cref{fig_tet_cyl}.  As noted above, for magnetic fields in 3D, the consequence of symmetry promotion is that the field is uniformly zero in the high symmetry region. 

\section{Symmetry Promotion in 2 Dimensions}
\label{sec_symmetry_promotion_2d}
In 2D, the stronger form of Schur's lemma fails in general for real representations, and with it also \cref{prop_3d}.  It turns out that symmetry promotion still holds in 2D, but in a more non-trivial way.  We formulate this as 
\begin{prop}
Let $G$ be a finite subgroup of $SO(2)$ which acts irreducibly on $\R^2$.  Let $\bm{v}(\bm{x})$ be a (pseudo-)vector field which is invariant under $G$ and which is affine linear on some region $U\subset \R^2$, where $U$ is also invariant under $G$.  Then there are some scalars $\lambda,\mu\in\R$ such that $\bm{v}(\bm{x}) = \lambda\bm{x} + \mu Y \bm{x}$ for all $\bm{x}\in U$, where $Y=\left(\begin{smallmatrix}0 & -1 \\ 1 & 0\end{smallmatrix}\right)$.
\label{prop_2d}
\end{prop}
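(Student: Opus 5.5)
The plan follows the proof of \cref{prop_3d} and replaces only its final step. By affine linearity we write $\bm v(\x) = A\x + \bb$ on $U$, and invariance under $G$ gives $Ag\x + \bb = gA\x + g\bb$ for all $\x\in U$ and $g\in G$. The averaging lemma, $\sum_{g\in G} g\x = \bm 0$, carries over verbatim to $\R^2$, since its proof uses only irreducibility. Averaging $\bm v(g\x)$ over $G$ in two ways therefore gives $\bb = \bm 0$, exactly as before. The spanning argument also transfers unchanged. For nonzero $\x\in U$ the orbit $\{g\x\}$ spans $\R^2$, since otherwise it would span a proper invariant subspace, so $Ag=gA$ on all of $\R^2$ for every $g\in G$. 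The statement omits the hypothesis that $U$ contains a nonzero point; if $U\subseteq\{\bm 0\}$, then $\bm v(\bm 0)=\bb=\bm 0$ and the claim holds trivially with $\lambda=\mu=0$.

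The new step is to determine the commutant of $G$ in $2\times 2$ real matrices. A finite subgroup of $SO(2)$ is cyclic, generated by a rotation $R_\theta = \cos\theta\,\mathbbm{1} + \sin\theta\, Y$ with $\theta = 2\pi/|G|$. Irreducibility forces $|G|\ge 3$: for $|G|\le 2$ every element is $\pm\mathbbm{1}$, and then any line is invariant. Hence $\sin\theta\neq 0$. Commuting with $R_\theta$ is therefore equivalent to commuting with $Y$. Writing $A = \left(\begin{smallmatrix} a & b\\ c & d\end{smallmatrix}\right)$, the condition $AY=YA$ gives $d=a$ and $c=-b$. Thus $A = a\mathbbm{1} + c\,Y$, which is the form $\lambda\mathbbm{1}+\mu Y$.

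Alternatively, one could argue abstractly. By the lemma (Schur's lemma), a nonzero $A$ commuting with $G$ is invertible, and the commutant is a real division algebra containing $\mathbbm{1}$ and $Y$, hence isomorphic to $\mathbb{C}$. The direct computation is simpler, and I would present that.

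The only real obstacle is justifying that $G$ contains a rotation with $\sin\theta\ne 0$. This amounts to the classification of finite subgroups of $SO(2)$ as cyclic groups, combined with the observation that irreducibility excludes the groups $\{\mathbbm{1}\}$ and $\{\pm\mathbbm{1}\}$. In fact one only needs a single element of $G$ other than $\pm\mathbbm{1}$, and any such element has $\sin\theta\neq 0$. So the cyclic classification is not strictly required: the argument only uses that every element of $SO(2)$ has the form $\cos\theta\,\mathbbm{1}+\sin\theta\,Y$.
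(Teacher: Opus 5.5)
Your proposal is correct and takes essentially the same route as the paper. You reduce to a linear $A$ commuting with $G$ exactly as in \cref{prop_3d}, use irreducibility to find an element with $\sin\theta\neq 0$, and conclude that $A$ commutes with $Y$. The only differences are cosmetic: you find the commutant of $Y$ by comparing matrix entries rather than through the basis $\mathbbm{1},Y,X,Z$ and commutator identities, and you explicitly handle the degenerate case $U\subseteq\{\bm 0\}$, which the paper leaves implicit by dropping the nonzero-point hypothesis of \cref{prop_3d}.
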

Note that any element of $SO(2)$ can be written as $\cos(\theta)\mathbbm{1} + \sin(\theta)Y$ for some $\theta$, so \cref{prop_2d} implies that any such rotation commutes with the matrix $\lambda \mathbbm{1} + \mu Y$ characterizing $\bm{v}(\bm{x})$, and thus $\bm{v}(\bm{x})$ in the region $U$ is invariant under arbitrary rotations in $SO(2)$. 

\begin{proof}
As in the proof of \cref{prop_3d}, $\bm{v}(\bm{x})$ is in fact a linear function, $\bm{v}(\bm{x}) = A\bm{x}$ for some $A\in\R^{2\times 2}$ satisfying $Ag = gA$ for all $g\in G$. Irreducibility of $G$ also implies that there exists some $g\in G$ of the form 
\begin{equation}
g=\begin{pmatrix} \cos(\theta) & -\sin(\theta) \\ \sin(\theta) & \cos(\theta) \end{pmatrix}
\end{equation}
with $\theta$ not a multiple of $\pi$, so that the off-diagonal terms are non-zero, for otherwise the $x$-axis would be an invariant subspace of the action of $G$.  Since $A$ commutes with this $g$, it also commutes with $(g-\cos(\theta) \mathbbm{1})/\sin(\theta) = Y$.  A basis for the space of all $2\times 2$ real matrices is given by $\mathbbm{1}$ and $Y$ together with $Z = \left(\begin{smallmatrix}1 & 0 \\ 0 & -1\end{smallmatrix}\right)$ and $X = \left(\begin{smallmatrix}0 & 1 \\ 1 & 0 \end{smallmatrix}\right)$.  (As an aside, we note that $X$, $iY$, and $Z$ are the Pauli matrices.)  Writing $A = \lambda\mathbbm{1} + \mu Y + \nu X + \xi Z$ and using the easily verified identities $[Y,X] = -2Z$, $[Y,Z] = 2X$, we find 
\begin{equation}
0 = [Y,A] = 2\xi X - 2\nu Z.
\end{equation}
Since $X$ and $Z$ are linearly independent, we must have $0=\nu=\xi$, proving the proposition. 
\end{proof}

\begin{figure}[b]
\centering
    \includegraphics[width=0.6\linewidth]{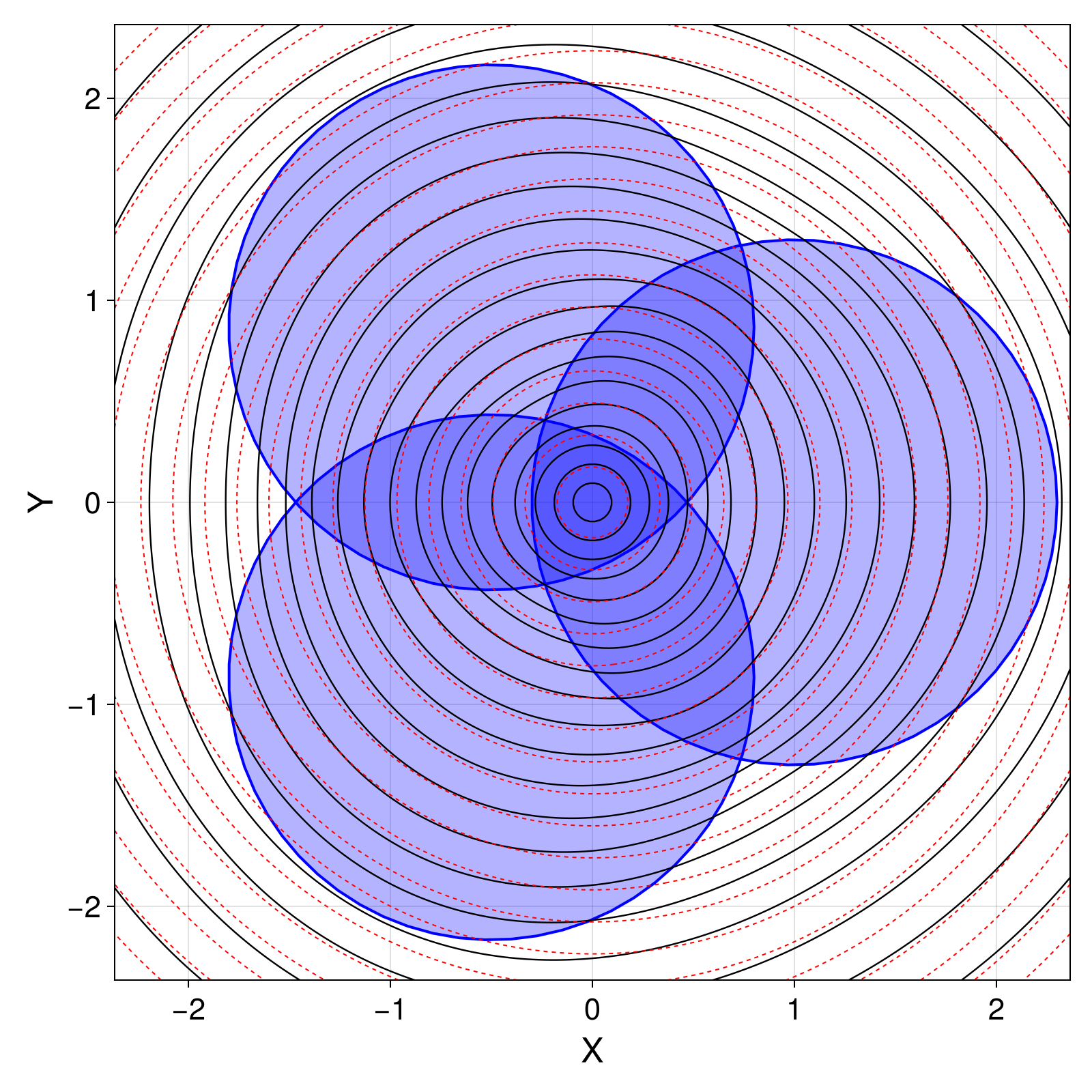}
    \caption{Current carrying wires with a threefold rotational symmetry in the $x-y$ plane.  Black contours are magnetic field lines.  Dashed red contours are circles centered on the axis of symmetry to emphasize the asymmetry of the magnetic field lines outside of the central region.  This figure may also be interpreted as a cross section of uniform charge cylinders, with the magnetic field lines being reinterpreted as equipotential curves.}
\label{fig_wires2d} 
\end{figure}

As in 3D, Maxwell's equations provide useful constraints on charge and current distributions which exhibit 2D symmetry promotion.  Note that the ($z$-component of) curl of $\bm{v}(\bm{x}) = \lambda \bm{x} + \mu Y\bm{x}$ is $2\mu$, so by Ampere's Law if a magnetic field satisfies the hypotheses of \cref{prop_2d} and there is no time varying electric field, we must have $\mu = \frac{1}{2}\mu_0 J$, where $J$ is the ($z$-component of) current density, along with $\lambda = 0$ from $\nabla \cdot \B = 0$.  Similarly, for an electric field satisfying the hypotheses of \cref{prop_2d}, provided there are no time varying magnetic fields, from Faraday's Law we must have $\mu=0$, and from Gauss' Law we have $\lambda = \frac{\rho}{2\epsilon_0 }$. 

The finite subgroups of $SO(2)$ can be identified with the $n$-th roots of unity for some positive integer $n$, i.e. the set of all rotations through angles $\frac{2k\pi}{n}$ with $k\in\{0,1,\dots,n-1\}$.  For the group to act irreducibly on $\R^2$ we must further require $n>2$.  Elementary charge distributions which generate affine linear fields in the plane include transverse cross sections of the infinite slab and cylinder.  The example of \cref{fig_slabs2d} can be reinterpreted as a charge distribution of uniform slabs, with magnetic field lines becoming equipotential curves.  Elementary current distributions which produce affine linear fields include cross sections of the thick wire and thick sheet.  An example of a symmetric configuration of thick wires is shown in \cref{fig_wires2d}.  As in 3D, one may truncate the distributions of infinite current slab or charged slab to a bounded region and apply suitable boundary charges or currents so as to maintain a linear interior field, such that the resulting distribution may be used to generate examples of symmetry promotion. 

Charge and current distributions in 2D have a duality which allows any example of symmetry promotion for the electric field to be reinterpreted as an example for the magnetic field and vice versa.  This duality follows directly from the Coulomb and Biot-Savart Laws in this case, 
\begin{align}
\E(\x) & = \frac{1}{4\pi\epsilon_0}\iiint_{\R^3} \rho(\x') \frac{\x-\x'}{\left|\x-\x'\right|^3} \intd^3\x' = \frac{1}{2\pi\epsilon_0}\iint_{\R^2} \rho(\x_\perp') \frac{\x_\perp-\x_\perp'}{\left|\x_\perp-\x_\perp'\right|^2} \intd^2\x_\perp' \\
\B(\x) & = \frac{\mu_0}{4\pi} \hat{\bm z}\times \iiint_{\R^3} J(\x') \frac{\x-\x'}{\left|\x-\x'\right|^3} \intd^3\x' = \frac{\mu_0}{2\pi} \hat{\bm z}\times \iint_{\R^2} J(\x_\perp') \frac{\x_\perp-\x_\perp'}{\left|\x_\perp-\x_\perp'\right|^2} \intd^2\x_\perp',
\end{align}
where we have used the $z$-translation invariance of $\rho$ and $J$ to evaluate the $z$ integral and $\x_\perp$ denotes the projection of $\x$ into the $x-y$ plane.  We see that the magnetic field $\B(\x)$ generated by current density $J(\x)$ is equal to $\hat{\bm{z}}\times \E(\x)/c$, where $\E(\x)$ is generated by charge density $\rho(\x) = J(\x)/c$.  Since the electric field is normal to equipotentials, we see that the dual magnetic field is everywhere tangent to the equipotential curves, which thus coincide with magnetic field lines.  We may therefore interpret examples like \cref{fig_slabs2d,fig_wires2d} as showing either charge distributions and equipotential curves or dual current distributions and magnetic field lines. 

\section{Generalizations and Discussion}
\label{sec_discussion}

We indicate briefly three ways in which the preceding results may be generalized.  Firstly, we discuss under what circumstances the role of the special orthogonal group $SO(n)$ can be replaced by the full orthogonal group $O(n)$, i.e. how do improper rotations figure into symmetry promotion?  Secondly, we mention some situations in which time dependent fields exhibit symmetry promotion.  Finally, we address the case of higher dimensional fields. 

\subsection{Improper Rotations}
In dimensions $n=2$ and $n=3$, it can be shown that any finite subgroup of $O(n)$ which acts irreducibly on $\R^n$ contains a subgroup consisting of proper rotations which also acts irreducibly on $\R^n$.  Thus if the symmetry of a charge or current distribution is invariant under some such group $G\subset O(n)$, the symmetry promotion results \cref{prop_3d} and \cref{prop_2d} still apply.  Since $O(n)$ is generated by $SO(n)$ and any element $g\in O(n) \setminus SO(n)$, if $G$ contains an improper rotation then the resulting field must be invariant under the full orthogonal group $O(n)$.  

One can show by direct calculation that a vector field of the form $\bm{v}(\x) = \lambda \x$ is always $O(n)$ invariant, while a pseudo-vector field of this form is only $O(n)$ invariant if it is trivial, i.e. $\lambda=0$.  To prove the latter statement, suppose that for some $R\in O(n)\setminus SO(n)$ and for all $\x \in \R^n$ we have 
\begin{equation}
\lambda \x = \bm{v}(\x) = \det(R) R \bm{v}(R^{-1} \x) = - R \lambda R^{-1} \x = -\lambda \x. 
\end{equation}  
Hence $\lambda=0$.  This provides an alternative proof that the only symmetry-promoted magnetic fields in 3D are uniformly zero.  Similarly, in 2D a vector field of the form $\bm{v}(\x) = \lambda \x + \mu Y\x$ is only $O(2)$ invariant for $\mu=0$.

\subsection{Time Dependent Fields}
A linearly varying electric field can be produced via Faraday's Law by ramping the current of a solenoid in time.  This electric field has the same functional form as the magnetic field from a thick wire.  Similarly, ramping the current of two parallel sheets can generate an electric field analogous to the magnetic field of a thick sheet.  In this way the 2D magnetic examples of \cref{fig_slabs2d,fig_wires2d} can be reinterpreted as electric examples in a way which is distinct from the duality mentioned in \cref{sec_symmetry_promotion_2d}.  Note in these cases the electric field so generated is not $O(2)$ invariant (see previous subsection), which reflects the fact that an inversion of space will reverse the flow of current, and with it the induced magnetic and electric fields. 

Linear magnetic fields can also in principle be constructed via Ampere's Law by ramping an electric field in time.  However, unlike the previous case, this does not lead to new geometries for symmetry promotion, since a time-varying electric field yields the same magnetic field as a current (in contrast to the previous case, where time-varying magnetic fields yield electric fields which are impossible to construct from static charge distributions). 

\subsection{Higher Dimensions}

We have shown in dimensions $n=2$ and $n=3$ that if an affine linear field is invariant under a group $G$ of proper rotations which acts irreducibly, then the field must necessarily be invariant under $SO(n)$.  This holds also in arbitrary odd dimensions $n$, since \cref{prop_3d} generalizes immediately to this case.  One might thus wonder if this statement of symmetry promotion holds in all dimensions.  Remarkably, this is not the case.  A counterexample is furnished by the quaternion group acting on $\R^4$.  The quaternion group generators $i, j, k$ can be represented by the matrices
\begin{equation}
\rep(i) = \begin{pmatrix} 0 & -1 & 0 & 0 \\ 1 & 0 & 0 & 0 \\ 0 & 0 & 0 & -1 \\ 0 & 0 & 1 & 0 \end{pmatrix} \qquad \rep(j) = \begin{pmatrix} 0 & 0 & -1 & 0 \\ 0 & 0 & 0 & 1 \\ 1 & 0 & 0 & 0 \\ 0 & -1 & 0 & 0 \end{pmatrix} \qquad \rep(k) = \begin{pmatrix} 0 & 0 & 0 & -1 \\ 0 & 0 & -1 & 0 \\ 0 & 1 & 0 & 0 \\ 1 & 0 & 0 & 0 \end{pmatrix}.
\end{equation}
We will verify below that the matrices $\pm \mathbbm{1}, \pm \rep(i), \pm \rep(j), \pm \rep(k)$ comprise an irreducible representation of the quaternion group.  Consider the $SO(4)$ matrices 
\begin{equation}
A = \begin{pmatrix} 0 & 1 & 0 & 0 \\ -1 & 0 & 0 & 0 \\ 0 & 0 & 0 & -1 \\ 0 & 0 & 1 & 0 \end{pmatrix} \qquad R = \begin{pmatrix} 0 & 0 & -1 & 0 \\ 0 & 1 & 0 & 0 \\ 0 & 0 & 0 & -1 \\ 1 & 0 & 0 & 0 \end{pmatrix} 
\end{equation}
and define a vector field $\bm{v}(\bm{x}) = A\bm{x}$. A straightforward calculation shows that $[A,g]=0$ for all $g\in \{\pm \mathbbm{1}, \pm \rep(i), \pm \rep(j), \pm \rep(k)\}$ and $[A,R] \neq 0$.  Thus $\bm{v}(\bm{x})$ provides a counterexample to symmetry promotion in 4D.

To show that $G = \{\pm \mathbbm{1}, \pm \rep(i), \pm \rep(j), \pm \rep(k)\}$ acts irreducibly on $\R^4$ it suffices to compute the quantities 
\begin{equation}
N = \frac{1}{|G|} \sum_{g\in G} \text{Tr}(g)^2 \qquad \qquad I = \frac{1}{|G|}\sum_{g\in G} \text{Tr}(g^2).
\end{equation}
The former quantity is the squared norm of the character of the representation, and the latter is known as the Frobenius-Schur indicator.  It can be shown using character theory \cite{fulton2013representation} that a real representation is irreducible if and only if $(N,I) = (1,1)$, $(N,I) = (2,0)$, or $(N,I) = (4,-2)$.  Direct computation shows that $(N,I) = (4,-2)$, hence the representation is irreducible. 

\section{Conclusion}
\label{sec_conclusion}
The objective of this work has been to see how far the initial example of \cref{fig_cylinders} can be generalized, and we have seen that it can be generalized quite far indeed.  The key ingredients for symmetry promotion are \textit{affine linearity} of the fields and \textit{irreducibility} of the group action.  Fields satisfying these hypotheses can look quite different in two vs. three dimensions.  The fact that symmetry promotion fails altogether in 4D in our opinion makes the phenomenon all the more interesting in those dimensions for which it does occur. Another common feature to all our examples is the presence of sharp boundaries in the source distribution, which divide space into an interior region where symmetry promotion applies and an exterior region where the fields only exhibit the smaller symmetry of the source distribution. 

We view much of the value of these results as pedagogical in nature.  Symmetry promotion in electro- and magnetostatics provides a wealth of examples in an area known for having a limited sample of solvable problems.  It also is an accessible case study in group representation theory, which stands out all the more for involving non-trivial subtleties of representations on real vector spaces. 

\begin{acknowledgments}
We thank Tadashi Tokieda for helpful discussions, and for pointing out an error in an earlier draft of the manuscript.  The authors acknowledge use of Gemini Pro 3.1 for construction the symmetry promotion counterexample in 4D, as well as OpenAI Codex and ChatGPT for error checking of the manuscript. 
\end{acknowledgments}

\bibliography{bibs}

\end{document}